\pgfplotsset{compat=1.14}
\def\input@path{{figures/}}
\newlist{inenum}{enumerate*}{1}
\setlist[inenum]{label=(\roman*)}
\newtheorem{theorem}{Theorem}[section]
\newtheorem{lemma}[theorem]{Lemma}
\crefname{theorem}{Theorem}{Theorems}
\Crefname{lemma}{Lemma}{Lemmas}
\Crefname{claim}{Claim}{Claims}
\Crefname{observation}{Observation}{Observations}
\Crefname{invariant}{Invariant}{Invariants}
\newcommand{\dist}{\textsc{Dist}\xspace}
\newcommand{\cM}{\mathcal{M}}
\newcommand{\sol}[1]{}
\newcommand{\appendixref}{\if\fullversion1 \cref{appendix:extra-experiments}\xspace \else the Appendix\xspace \fi}
\definecolor{armygreen}{rgb}{0.29, 0.33, 0.13}
\definecolor{darkolivegreen}{rgb}{0.33, 0.42, 0.18}
\definecolor{ao(english)}{rgb}{0.0, 0.5, 0.0}
\begin{document}

\def\gplay{1}

\title{\textbf{Large-Scale Graph Building in Dynamic Environments: \\ Low Latency and High Quality}}

\author[1]{Filipe Miguel Gonçalves de Almeida}
\author[1]{CJ Carey}
\author[1]{Hendrik Fichtenberger}
\author[1]{Jonathan Halcrow}
\author[1]{Silvio Lattanzi}
\author[1]{André Linhares}
\author[1]{Tao Meng}
\author[1]{Ashkan Norouzi-Fard}
\author[1]{Nikos Parotsidis}
\author[1]{Bryan Perozzi}
\author[1]{David Simcha}

\affil[1]{Google}
\affil[ ]{\texttt{\{filipea, cjcarey, fichtenberger, halcrow, silviol, linhares, taomeng, ashkannorouzi, nikosp, hubris, dsimcha\}@google.com}}

\date{}

\maketitle

\begin{abstract}

Learning and constructing large-scale graphs has attracted attention in recent decades, resulting in a rich literature that introduced various systems, tools, and algorithms. Grale~\cite{grale_paper} is one of such tools that is designed for offline environments and is deployed in more than 50 different industrial settings at Google. %
Grale is widely applicable because of its ability to efficiently learn and construct a graph on datasets with multiple types of features.
However, it is often the case that applications require the underlying data to evolve continuously and rapidly and the updated graph needs to be available with low latency.
Such setting make the use of Grale prohibitive.
While there are Approximate Nearest Neighbor (ANN) systems that handle dynamic updates with low latency, they are mostly limited to similarities over a single embedding.

In this work, we introduce a system that inherits the advantages and the quality of Grale, and maintains a graph construction in a dynamic setting with tens of milliseconds of latency per request.
We call the system \emph{Dynamic Grale Using ScaNN} (Dynamic GUS). 
Our system has a wide range of applications with over 10 deployments at Google. %
\if\gplay1%
One of the applications is in Android Security and Privacy, where Dynamic Grale Using ScaNN enables capturing harmful applications $4$ times faster, before they can reach users.
\fi%
\end{abstract}

\paragraph{Keywords:} graph building, graph learning, dynamic systems

\hrulefill %

\vspace{-0.2cm}
\section{Introduction}

Mining real-world data often requires dealing with structured data that can conveniently represented by graphs.
This fact has motivated a surge of interest in graph-based methods for semi-supervised learning which combine the power of sparse labeled data items with structure present across the entire dataset~\cite{DBLP:conf/icml/BlumC01,DBLP:journals/corr/abs-2005-03675,DBLP:conf/icml/YangCS16,DBLP:conf/cikm/WuZA18,DBLP:conf/nips/ZhouBLWS03}. %

One of the challenges in deploying graph-based solutions is that in some cases the network structure is not explicitly defined and should be learned from data. This is particularly important in multimodal applications where it is crucial to be able to combine different sources of information efficiently. In fact in those settings, we have a wealth of different modes to compute similarities, each of which may be suboptimal in some way~\cite{DBLP:conf/pkdd/SousaRB13}. For example, video data comes with visual, audio, and text signals, in addition to other kinds of video metadata---each with their own set of similarity measures. Thanks to its practical importance, the task of efficiently learning similarity measures between objects and construct a graph representing them has been extensively studied in recent years~\cite{murua2008potts, ravi2016large, DBLP:conf/pkdd/SousaRB13, grale_paper}.

In a recent work \cite{grale_paper}, the authors proposed a novel approach to compute a similarity measure by combining the similarities of individual point features together. Their algorithm, named Grale, is motivated by the challenges faced when applying semi-supervised learning to rich heterogeneous datasets in industrial settings. Grale can learn pairwise similarity models and build graphs for datasets with billions of points over the entire input in a matter of hours, but it is not able to operate in a dynamic setting where points are inserted, deleted, and modified in a continuous and rapid stream of mutations.

In today's world, large-scale unlabeled data is generated and modified rapidly. for instance, applications such as detecting policy-violating content in an online social media platform with thousands of uploads per second, or finding similar items in recommendation systems with thousands of new entities per second, it is critical to have the ability to build and maintain a graph in real time without recomputing it from scratch. In such applications, it is equally important that the solution limits data staleness to no more than a few seconds (i.e. the underlying data structure must be updated nearly instantly to reflect new mutations received by the system). Those applications, in addition to the successful deployment of Grale in a wide range of applications at Google, motivated our work on finding a solution for computing graphs similar to Grale in a dynamic setting.

Nearest Neighbors search (NN search), or Approximate Nearest Neighbor search (ANN search), can be used to efficiently identify the most similar data points to each other point (or to a query point, if the queries are made in an online mode), which can be used to generate a graph where each point is connected to each most similar point.
Such approaches and can often handle dynamically evolving datasets. As discussed above, this restrict the applicability of such tools as they exhibit the inherent shortcoming of requiring a single embedding representation of each data point.

In this work we combine the best of both families of approaches that we mentioned, namely: 1) the dynamic nature of ANN search algorithms (such as ScaNN\cite{scann}) on single-embedding data representation, and 2) support of multimodal data representation that is often required by real-world applications.
In a nutshell, our approach is to introduce an efficient transformation algorithm that transforms a multimodal representation of a data point to a sparse embedding representation, such that when the various features of two points are similar to each other then these points are also similar in the sparse embedding space.
We use this transformation to compute a set of candidate nearest neighbors of a point, which we then post-process to determine the accurate similarity of the points w.r.t. their multimodal representation.
As a result, we introduce a novel approach named Dynamic Grale Using ScaNN (Dynamic GUS), that achieves all the requirements for a widely-applicable dynamic graph system operating on multimodal datasets.
Notably, our approach is widely used inside Google, where it acts as a critical component of dynamic graph building systems. The core achievement of Dynamic GUS is to compute high-quality neighborhoods of points with sub-second latency in a dynamic environment.

Dynamic GUS has the following capabilities:
\begin{itemize}
    \item Hundreds of thousands of new points with their respective features can be inserted, modified, or deleted per second.
    \item The neighborhoods of hundreds of thousands of new or previously known points can be queried per second. The answers are provided with sub-second latency and the data freshness is within seconds for the 99th percentile of queries.
    \item The neighborhood for a given point is similar to the one created by Grale, if it is ran from scratch on the entire input.
\end{itemize}

Dynamic GUS is not only a proved valuable tool across Google with over 10 deployments across multiple products and use cases; it is also the backbone of various other dynamic and real-time graph algorithms. Indeed, the computed neighborhood of the query points is used to enable more involved graph mining algorithms, including but not limited to Clustering, Label Propagation, and Graph Neural Networks (GNNs).

\if\gplay1
\vspace{-0.1cm}
\subsection{Example Application}

Google Play is a store with a large number of users and millions of apps. The Android Security and Privacy team is actively developing solutions for protecting the users from harmful apps. They use Dynamic Grale Using ScaNN to utilize the dynamic risk models to detect potentially harmful apps in the store. This system helps to bring in $+40\%$ of action rate improvement, and a factor of 4 reduction in detection latency. Dynamic Grale Using ScaNN enables capturing harmful apps much faster, before they can reach users.

\fi
\vspace{-0.1cm}
\section{Preliminaries}
Quantization based techniques are the current state-of-the-art for scaling dot-product search to massive databases and a wide range of applications (e.g., user query~\cite{DBLP:conf/recsys/CremonesiKT10}, classification tasks~\cite{DBLP:conf/cvpr/DeanRSSVY13}, training tasks~\cite{DBLP:conf/icml/YenKYHKR18,DBLP:conf/icml/MussmannE16}) resulted in a rich line of work with  multiple efficient implementations~\cite{DBLP:conf/cvpr/GeHK013,babenko2014additive,scann,ChenW21,WangL12,WangWZTGL12,DBLP:journals/tbd/JohnsonDJ21,8594636}.

ScaNN (Scalable Nearest Neighbor ~\cite{scann}) provides large-scale, efficient vector search while delivering fast indexing time and low memory footprint. ScaNN incorporates multiple techniques including anisotropic loss for training the multi-level search tree~\cite{scann} and orthogonality-amplified residuals to introduce independent, effective redundancy~\cite{sun2024soar}. 
We use ScaNN as a core component of our approach with a significant role in quality and latency of Dynamic GUS. In short, ScaNN is a very flexible system that computes the nearest neighbors of a given query point in a dynamic setting with a wide range of applications.
It supports various distance measures and operates on both sparse and dense embeddings.\footnote{ScaNN's capabilities are beyond the ones described in~\cite{scann}, and some are currently only available internally at Google.} In this work, we are interested in sparse embeddings and the distance measure that we use is the negative of inner dot product (i.e., dot product times $-1$). 
 
Let us start by introducing some notation before providing more details on ScaNN and how it is used in our system. Given a set of points $P$, a sparse embedding $\cM$ maps each point $p \in P$ to a finite-support real vector denoted by $\cM (p)$. 
The distance between two points $p_1, p_2 \in P$ according to the embedding $\cM$ is defined to be the negative of the dot product between their embeddings:
\vspace{-0.4cm}

\begin{align*}
    \dist_{\cM} (p_1, p_2) = - \cM(p_1) \cdot \cM(p_2)\,.
\end{align*}

We omit the subscript $\cM$ and write simply $\dist (p_1, p_2)$ when  the embedding $\cM$ is clear from the context. One of the functionalities of ScaNN is that it finds the nearest neighbors of a given point $p_0$ (potentially $p_0 \notin P$) in a dynamic setting. More precisely, given $p_0$ and an integer $k$, it finds a set $X \subseteq P$ of size $k$ such that:
\vspace{-0.4cm}

\begin{align*}
     \max \{\dist_{\cM} (p, p_0) : p \in X\} \leq \dist_{\cM} (p', p_0) \quad \forall p' \in P \setminus X \,.
\end{align*}
Furthermore, ScaNN has the capability to find all the points with distance  lower than a specified threshold $\tau_0$ to $p_0$, i.e.,
\vspace{-0.4cm}

\begin{align*}
    X = \{p \in P \mid \dist_{\cM} (p, p_0) \leq \tau_0\} \,.
\end{align*}

Instead of ScaNN, any (approximate) nearest neighbor index that is dynamic and supports high-dimensional sparse vectors could be used. While many target libraries low-dimensional, dense data sets, pgvector~\cite{githubGitHubPgvectorpgvector}, NGT~\cite{githubGitHubYahoojapanNGT} or \cite{githubGitHubKelidciknn,li2016fast,githubGitHubFacebookresearchpysparnn} are examples of libraries that support millions or billions of sparse vectors with at least thousands of dimensions. We note that the performance and quality of our solution when used with another library than ScaNN may differ from what we describe in this paper, though.

\section{Dynamic Grale Using ScaNN}

In this section we describe the Dynamic Grale Using ScaNN system (\emph{Dynamic GUS}) and explain its components from a high level point of view. The details of the components are described in the following sections. Let us start by elaborating its functionality and capabilities.

\subsection{How it is used}
Our system supports two main types of Remote Procedure Calls (RPCs). The first type enables users to modify the underlying data by inserting new points, deleting existing points, and changing point features. We refer to these as Mutation RPCs, and they simply return an acknowledgment that the process has executed successfully.
The second type computes the neighborhood of a specified point, which can be either new or existing. Dynamic GUS returns a response containing a list of all similar points and their similarity scores.
All these operations are executed with sub-second latency in a high-query-per-second (QPS) setting.

\subsection{Components}
Dynamic Grale Using ScaNN has three main components that enable it to process and respond to the aforementioned RPCs.

\paragraph{Embedding Generation} One of the main components of our system is an {\em Embedding Generator} algorithm, which generates sparse embeddings for the points. It is crucial for this component to have a very low latency since it is in the critical path for both mutation and neighborhood computation. To achieve that, it needs to operate with local information (e.g., the features of the point) and cannot afford to use techniques that require a high volume of computation. In this work we design and implement an efficient Embedding Generation algorithm that, given the features of a point, computes a quality sparse embedding for it without using any extra information. Moreover, we describe techniques–namely, filtering and inverse document frequency–that using precomputed data, improve the quality of the Embedding Generator algorithm. We provide more details about our proposed algorithm in the following sections.

\paragraph{Neighbors Computation}
After computing the embedding for a query point, we compute its neighbors using any algorithm that finds the nearest neighbors in a given embedding space. A rich line of research has been dedicated to finding nearest neighbors in an embedding space, and in this work we use ScaNN~\cite{scann}.
\paragraph{Similarity Computation}
We then compute the similarity between two points using a pre-trained model based on their features. Any desired model can be used, e.g., Deep Neural Networks, Decision Trees, and Large Language Models.

\subsection{Dynamic Operations}
Equipped with the components of our system, we are now able to detail how Dynamic GUS handles different RPCs.

\subsubsection{Inserting a new point, or updating an existing point.} When Dynamic GUS receives a request to insert or update a point, it first computes the embedding based on the provided features using the Embedding Generator algorithm. Then, it inserts or updates the point with its embedding in the underlying ScaNN system. This operation typically takes up to a second. Afterward, the inserted point will appear in the neighborhood of any point for which the neighborhood is computed, provided it is close enough. More precisely:

\begin{enumerate}
    \item User sends an RPC call to Dynamic GUS to insert or update the point $p$.
    \item Dynamic GUS computes a sparse embedding using the Embedding Generator, denoted by $\mathcal{M}(p)$, and inserts or updates $(p, \mathcal{M}(p))$ in ScaNN.
    \item Dynamic GUS returns an acknowledgment of the point being inserted to the client.
\end{enumerate}
\cref{fig:insert_or_update} also elaborates on how inserting or updating a new point is executed in Dynamic GUS.
\begin{figure}[t]
    \centering
    \includegraphics[width=8cm]{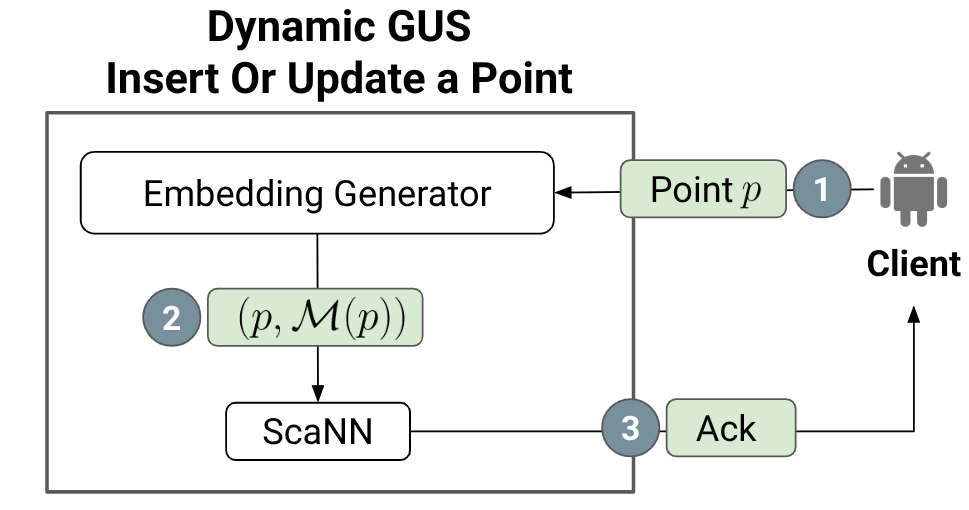}
    \caption{Inserting or Updating a point $p$ to Dynamic GUS.}
    \label{fig:insert_or_update}
\end{figure}

\subsubsection{Deleting a point} Given a request to delete a point, Dynamic GUS simply removes it from ScaNN. This operation is very fast and efficient.

\subsubsection{Computing the neighborhood of a point} Given a request to compute a neighborhood of a point $p$, Dynamic GUS first computes its embedding based on the features of the point using the Embedding Generator algorithm. Then it requests the closest points to $p$ (and their features) from ScaNN, denoted by $Q$. Afterwards it computes the weights of the edges between $p$ and all the points in $Q$ based on the machine-learning model that was trained. More precisely:
\begin{enumerate}
    \item User sends an RPC call to Dynamic GUS to compute the neighborhood of the point $p$.
    \item Dynamic GUS computes a sparse embedding using the Embedding Generator, denoted by $\mathcal{M}(p)$.
    \item Dynamic GUS sends to ScaNN a request to compute the nearest neighbors of $(p, \mathcal{M}(p))$, and obtains as response a set of points $Q$ that are close to $p$ in the embedding space.
    \item Dynamic GUS computes the similarity between $p$ and each point in the set $Q$.
    \item Dynamic GUS returns the neighbors $Q$ to and their similarity $S$ to the user. 
\end{enumerate}

\cref{fig:neighbor} also elaborates on how computing the neighborhood of a point point is executed in Dynamic GUS.
\begin{figure}[t]
    \centering
    \includegraphics[width=8cm]{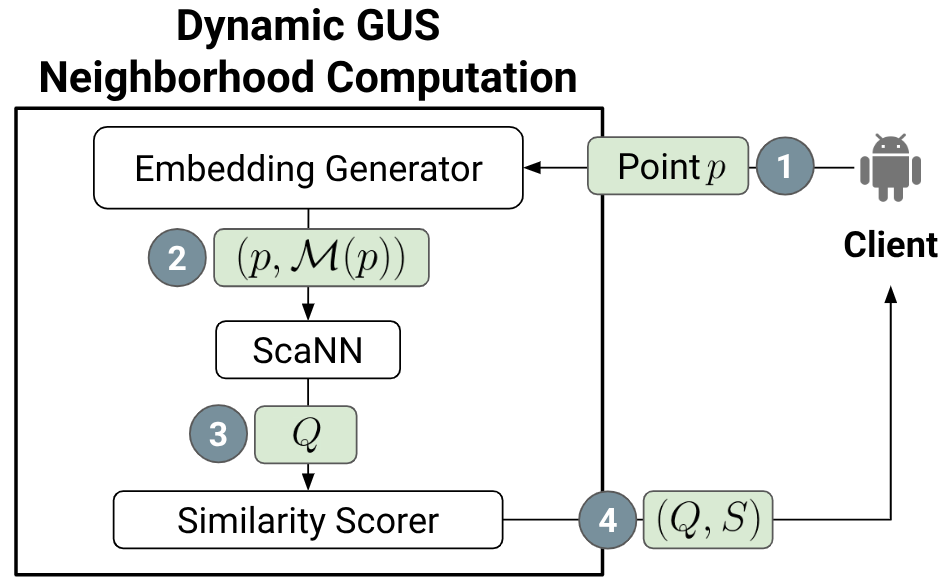}
    \caption{Computing the neighbors of a point $p$ with Dynamic GUS}
    \label{fig:neighbor}
\end{figure}

\section{Technical Details of GUS and its Components}

In this section we describe the different components of Dynamic GUS and provide more details of them. To that end, we start by describing Grale~\cite{grale_paper}. Grale has three main steps:
\begin{enumerate}
    \item \label{enumerate:grale-model} Train a machine-learning model that computes the similarity of two points.
    \item \label{enumerate:grale-sketching} Find \emph{scoring pairs} of points to be scored with the model based on a Locality Sensitive Hashing approach.
    \item \label{enumerate:grale-scoring} Score the edges of each scoring pair with the model.
\end{enumerate}

In the Dynamic GUS system, we assume that the model is trained in an offline setting in the same fashion as in Grale, as described next (we provide more details on how the model can be trained periodically and served in our system in the following sections).
To compute scoring pairs, Grale starts by creating for each point a list of bucket IDs based on its features, via Locality Sensitive Hashing. Intuitively, points that share a bucket ID are likely to be similar and can be considered to be scored. Afterwards, for each bucket ID, a list of all the points that contain this bucket ID is created.
Then all the pairs of points in each bucket are considered to be \emph{scoring pairs}. More details on how these buckets are created can be found in \cite{grale_paper}.%
For the sake of this paper, it is not important how these buckets are created; that can be done via an algorithm completely independent of Grale. Let us provide an example. Assume that we have three points with the following buckets:
\begin{itemize}
    \item Point $p_1$ with buckets: $\{b_1, b_2, b_4\}$
    \item Point $p_2$ with buckets: $\{b_1, b_3\}$
    \item Point $p_3$ with buckets: $\{b_3\}$
\end{itemize}
Then the points in each bucket would be:
\begin{itemize}
    \item Bucket $b_1$ has points: $\{p_1, p_2\}$
    \item Bucket $b_2$ has points: $\{p_1\}$
    \item Bucket $b_3$ has points: $\{p_2, p_3\}$
    \item Bucket $b_4$ has points: $\{p_1\}$
\end{itemize}
Therefore the candidate pairs are $(p_1, p_2)$ and $(p_2, p_3)$, and their similarities will be computed via the model in Grale.

While Grale has proven to be effective across dozens of applications at Google, the original design is not optimized for a dynamic setting. The reason is that even though computing the list of bucket IDs is fast and can be easily implemented in a dynamic setting, keeping the buckets updated and computing the scores of all scoring pairs based on them is very costly.

In Dynamic GUS, we use an approach based on embedding the points in a sparse space; to compute the neighborhood of a point, we retrieve the closest points to it in the embedding space. We provide more details in the remainder of this section.
\vspace{-0.1cm}

\subsection{Sparse Embedding Generation} \label{sec:embedding-construction}
In this section, we present our main algorithm which finds scoring pairs for in a dynamic setting. To that end, we first compute a sparse embedding for each point, which depends only on its features and not on any other points.
There are various properties that we expect from our embedding generator algorithm. First and foremost, for each point $p$, the set of close points to it is expected to be very similar to the set of points that would be considered neighbors of $p$ if we were to run Grale from scratch. Otherwise, we cannot expect to obtain similar quality as Grale. Moreover computing the embedding should admit an efficient implementation in a dynamic setting, to ensure low latency for the system.%

Keeping those points in mind, we start by defining an embedding algorithm to be used in Dynamic GUS, and then provide ideas to improve its quality. 

We define the embedding $\cM$ as follows. Let $\{b_1, b_2, \ldots b_\ell\}$ be the set of bucket IDs that Grale produces for point $p$ (as mentioned before, these buckets can be done via any other algorithm as well). The embedding of point $p$ has $\ell$ non-zero dimensions. These dimensions are $b_1, b_2, \ldots b_\ell$, and their weights are $1.0$ (everywhere else the weight is $0.0$), i.e.,

\begin{table} [htbp]
    \centering
    \begin{tabular}{r|c|c|c|c|}
         Dimensions: & $b_1$ & $b_2$ & $\cdots$ & $b_\ell$ \\
             \hline
         Weights of the dimensions: & $1.0$ & $1.0$ & $\cdots$ & $1.0$ 
    \end{tabular}
\end{table}
\vspace{-0.4cm}

To illustrate this, assume we have the following points:

\begin{itemize}
    \item Point $p_1$ with buckets: $\{b_1, b_2\}$
    \item Point $p_2$ with buckets: $\{b_1\}$
    \item Point $p_3$ with buckets: $\{b_2\}$
\end{itemize}
Then the sparse embeddings for those points are presented in \cref{example:sparse-embedding}.

\begin{table} [htbp]
    \centering
    \begin{tabular}{r|c|c|}
        $P_1$ dimensions: & $b_1$ & $b_2$ \\
             \hline
        $P_1$ weights of the dimensions: & $1.0$ & $1.0$ 
    \end{tabular}
    \centering
    \begin{tabular}{r|c|}
        $P_2$ dimensions: & $b_1$ \\
             \hline
        $P_2$ weights of the dimensions: & $1.0$
    \end{tabular}
    \centering
    \begin{tabular}{r|c|}
        $P_3$ dimensions: & $b_2$ \\
             \hline
        $P_3$ weights of the dimensions: & $1.0$
    \end{tabular}
    \caption{An example of sparse features generated by Dynamic GUS.     \label{example:sparse-embedding}}
\end{table}

Computing the embedding of a point is rather straightforward and in most cases only take a few milliseconds, therefore its impact on latency is negligible. This embedding is produced whenever a new point is inserted and the new point with its embedding is added to ScaNN. When we desire to compute the neighborhood of a point, we compute this embedding again and request ScaNN to return the closest points. We still need to argue that the set of nearest points to $p$ with respect to this embedding is similar to the set of neighbors that Grale produces for the same point. In
~\cref{lemma:same-output} we state this property formally; recall that the distance between two points $p_1, p_2$ in ScaNN is defined to be:
$$\dist_{\cM}  (p_1, p_2)= - \cM(p_1) \cdot \cM(p_2).$$

\begin{lemma} \label{lemma:same-output}
For any point $p$, the neighborhood of $p$ is exactly the same in Grale and Dynamic GUS if we retrieve all the points with negative distance to $p$ in ScaNN.
\end{lemma}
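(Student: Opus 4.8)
The plan is to reduce the claim to a simple counting identity about the dot product of the sparse embeddings. First I would observe that, by construction, $\cM(p)$ is the $0/1$ indicator vector of the set of bucket IDs assigned to $p$: its support is exactly $\{b_1, \dots, b_\ell\}$ and every nonzero weight equals $1.0$. Consequently, for any two points $p_1, p_2$, the term-by-term product defining $\cM(p_1) \cdot \cM(p_2)$ contributes $1$ for each dimension (bucket) that lies in the support of both embeddings and $0$ otherwise. Hence $\cM(p_1) \cdot \cM(p_2)$ equals precisely the number of bucket IDs shared by $p_1$ and $p_2$.

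From this identity the distance acquires a clean combinatorial meaning. Since $\dist_{\cM}(p_1, p_2) = -\cM(p_1) \cdot \cM(p_2)$, the distance is the negative of the number of shared buckets. In particular $\dist_{\cM}(p_1, p_2) < 0$ if and only if $p_1$ and $p_2$ share at least one bucket, and $\dist_{\cM}(p_1, p_2) = 0$ exactly when their supports are disjoint, i.e. when they share no bucket.

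Next I would connect this to Grale's notion of a neighbor. Recall from the description of Grale that two points form a \emph{scoring pair}---equivalently, one lies in the neighborhood of the other---precisely when they co-occur in at least one bucket. Combining this with the previous step, a point $p'$ belongs to the Grale neighborhood of $p$ if and only if $p$ and $p'$ share at least one bucket, which in turn holds if and only if $\dist_{\cM}(p, p') < 0$. Retrieving from ScaNN all points at negative distance to $p$ therefore returns exactly the set of points sharing a bucket with $p$, matching the Grale neighborhood of $p$ element for element.

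There is little genuine difficulty here: the argument is a direct translation between the set-intersection view of LSH buckets and the inner-product view of sparse $0/1$ embeddings. The only points requiring care are bookkeeping---confirming that the dot product correctly ignores dimensions outside both supports, and handling the trivial exclusion of $p$ itself, which shares all of its own buckets and would otherwise appear at the most negative distance. I expect the statement to follow immediately once the counting identity $\cM(p_1)\cdot\cM(p_2) = \lvert\{\text{buckets shared by } p_1, p_2\}\rvert$ is in place.
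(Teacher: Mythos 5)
Your proposal is correct and follows essentially the same argument as the paper: both reduce the claim to the identity that $\dist_{\cM}(p_1,p_2)$ equals minus the number of shared bucket IDs, so negative distance is equivalent to sharing a bucket, which is exactly Grale's criterion for a scoring pair. Your additional remark about excluding $p$ itself is a reasonable bookkeeping note but not part of the paper's (equally valid) proof.
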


\begin{proof}
Notice that for any two points $p_1, p_2$ their distance in $\cM$ is equal to minus the number of non-zero dimensions they share. Therefore $\dist(p_1, p_2) < 0$ if and only if the points share a non-zero dimension, which happens if and only if they share a bucket ID. By definition of Grale, two points are considered a scoring pair if and only if they share a bucket ID.
\end{proof}

Notice that~\cref{lemma:same-output} holds more generally for any embedding that maps each point $p$ to a vector having strictly positive values in the entries indexed by its bucket IDs (and value zero in the remaining dimensions). This observation guarantees that the lemma holds even after the improvements discussed in the next section.

In most practical applications, we are not interested in retrieving {\em all} the points that have negative distance to a query point $p$ in the embedding $\cM$, for two reasons: \begin{inenum} \item the total number of such points might be very large, which could result in a high latency which may not be suitable for practical use cases; and \item a large fraction of those points might be quite different from the query point (with respect to the similarity measure)\end{inenum}. To address that problem, in practice we bound the number of nearest points retrieved from ScaNN. 
\vspace{-0.1cm}

\subsection{Filtering and Inverse Document Frequency}
In this section we discuss two ideas to improve the embedding; more details on how we apply them to Dynamic GUS are provided in the later sections. 

\paragraph{Filtering overly popular bucket IDs.} Intuitively, if a fraction of the points (e.g., above 20-30\%) have the same bucket ID, this bucket is not a reliable source for finding candidate neighbors in both Grale and Dynamic GUS. That is, in almost all applications only a small number of points are similar to a given query point (typically below 1\% or even 0.01\% of the points). Overly popular buckets can be the result of common features, for instance the word "the" or "a" in a sentence. If a bucket ID is overly popular, we simply ignore it and do not create the corresponding non-zero dimension in the embedding. In the following sections we provide more information on how the overly popular bucket IDs can be computed.

\paragraph{Inverse Document Frequency}
In the embedding defined above, we did not take advantage of the possibility of using weights other than $\{0, 1\}$ for the coordinates of the embedding. Inverse Document Frequency (IDF) is a widely used idea in the literature. It assigns higher importance to entities (for instance words in a document) that are less common. We take advantage of this concept to define weights for the bucket IDs. We define the weight of a bucket ID to be the logarithm of the total number of points over the number of points that have this bucket ID. More precisely, let $P$ be the set of all the points for which we compute the embedding. For any bucket ID $b_i$, let $N(b_i)$ denote the number of points in $P$ that have this bucket ID. The weight of dimension indexed by $b_i$ is defined as
$$\log \left( \frac{|P|}{N(b_i)} \right).$$

For instance, we might create the following embedding for a point.
\vspace{-0.1cm}

\begin{table} [htbp]
    \centering
    \begin{tabular}{r|c|c|c|c|}
         Dimensions: & $b_1$ & $b_7$ & $b_{13}$ & $b_{14}$ \\
             \hline
         Weight of the dimension: & $1.0$ & $2.3$ & $1.4$ & $4.12$ 
    \end{tabular}
\end{table}
\vspace{-0.4cm}

\subsection{Offline preprocessing and periodic reloading}
In all the applications that Dynamic GUS is deployed on, an initial set of points is provided before the start of the Dynamic GUS service. We compute the embeddings of all those points and insert them into ScaNN. We use those points to train a model, identify popular bucket IDs for filtering, and compute the IDF scores to be used in Dynamic GUS. We keep those three components in the memory of the machines to make sure we have fast access to them when the embedding is computed for a point and also when we are computing the similarities for the scoring pairs computed by ScaNN. Notice that we can retrain the model and recompute overly popular bucket IDs and IDF values periodically to ensure that they remain approximately consistent with the evolving dataset.

\section{Empirical Evaluation}

\begin{figure*}
    \centering
    \begin{subfigure}[b]{0.48\textwidth}
        \centering
        \includegraphics[width=\textwidth]{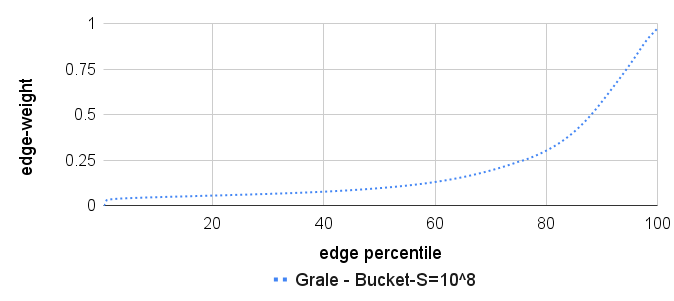}
        \caption{ogbn-products}
    \end{subfigure}
    \hfill
    \begin{subfigure}[b]{0.48\textwidth}
        \centering
        \includegraphics[width=\textwidth]{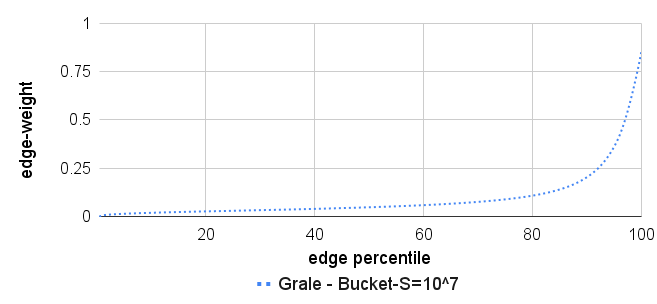}
        \caption{ogbn-arxiv}
    \end{subfigure}
\caption{\label{fig:all-edges} The edge-weight distribution when bucket splitting is not used in Grale and all the points with negative distance are retrieved from ScaNN. The total number of edges retrieved is $175,608,580,162$ for ogbn-products, and $28,515,851,126$ for ogbn-arxiv. The edges are identical for both Grale and Dynamic GUS, therefore we plotted only one line for each dataset.}
\end{figure*}

\begin{figure*}
    \centering
    \begin{subfigure}[b]{0.48\textwidth}
        \centering
        \includegraphics[width=\textwidth]{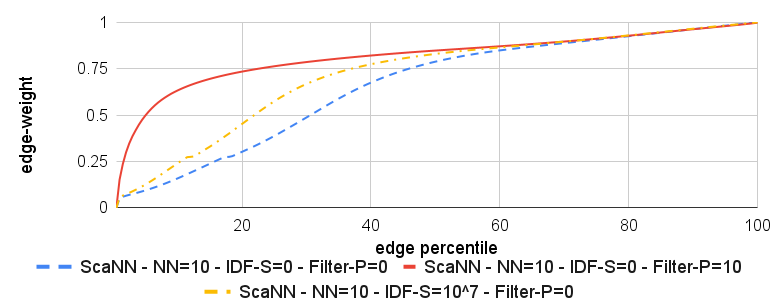}
        \caption{ogbn-products}
        \label{sfig:gus-sketch-nn10-prod}
    \end{subfigure}
    \hfill
    \begin{subfigure}[b]{0.48\textwidth}
        \centering
        \includegraphics[width=\textwidth]{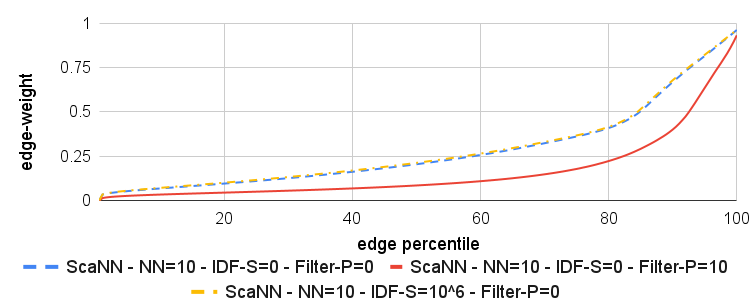}
        \caption{ogbn-arxiv}
        \label{sfig:gus-sketch-nn10-arxiv}
    \end{subfigure}
    \begin{subfigure}[b]{0.48\textwidth}
        \centering
        \includegraphics[width=\textwidth]{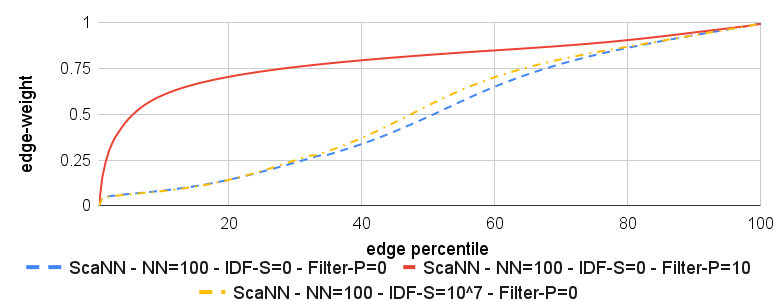}
        \caption{ogbn-products}
        \label{sfig:gus-sketch-nn100-prod}
    \end{subfigure}
    \hfill
    \begin{subfigure}[b]{0.48\textwidth}
        \centering
        \includegraphics[width=\textwidth]{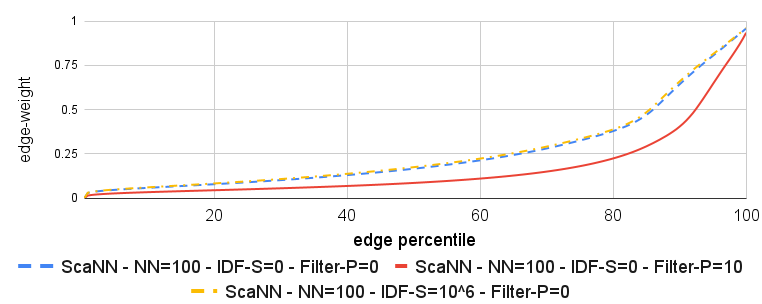}
        \caption{ogbn-arxiv}
        \label{sfig:gus-sketch-nn100-arxiv}
    \end{subfigure}
    \begin{subfigure}[b]{0.48\textwidth}
        \centering
        \includegraphics[width=\textwidth]{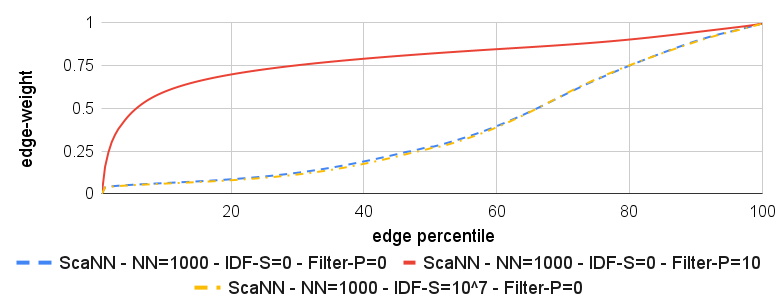}
        \caption{ogbn-products}
        \label{sfig:gus-sketch-nn1000-prod}
    \end{subfigure}
    \hfill
    \begin{subfigure}[b]{0.48\textwidth}
        \centering
        \includegraphics[width=\textwidth]{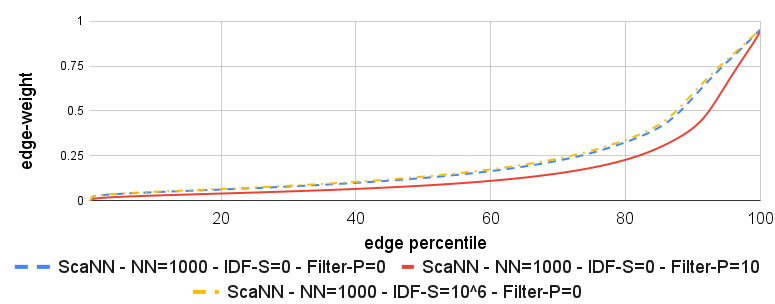}
        \caption{ogbn-arxiv}
        \label{sfig:gus-sketch-nn1000-arxiv}
    \end{subfigure}
\caption{\label{fig:gus-edges-nn=10} The edge-weight distribution when edges are retrieved using Dynamic GUS with (\subref{sfig:gus-sketch-nn10-prod}-\subref{sfig:gus-sketch-nn10-arxiv}) ScaNN-NN=10, (\subref{sfig:gus-sketch-nn100-prod}-\subref{sfig:gus-sketch-nn100-arxiv}) ScaNN-NN=100, (\subref{sfig:gus-sketch-nn1000-prod}-\subref{sfig:gus-sketch-nn1000-arxiv}) ScaNN-NN=1000, for varying IDF and filtering parameters. The total number of edges retrieved for different values of Filter-P, independently of the other parameters, are: 
(\subref{sfig:gus-sketch-nn10-prod}) $22,041,932$ when Filter-P=0 and $19,020,281$ when Filter-P=10;
(\subref{sfig:gus-sketch-nn10-arxiv}) $1,524,087$ when Filter-P=0 and $1,416,463$ when Filter-P=10;
(\subref{sfig:gus-sketch-nn100-prod}) $242,454,448$ when Filter-P=0 and $103,396,128$ when Filter-P=10;
(\subref{sfig:gus-sketch-nn100-arxiv}) $16,764,957$ when Filter-P=0 and $15,442,048$ when Filter-P=10;
(\subref{sfig:gus-sketch-nn1000-prod}) $2,446,579,971$ when Filter-P=0 and $135,807,863$ when Filter-P=10;
(\subref{sfig:gus-sketch-nn1000-arxiv}) $169,173,657$ when Filter-P=0 and $119,986,924$ when Filter-P=10.}
\end{figure*}

\begin{figure*}
    \centering
    \begin{subfigure}[b]{0.45\textwidth}
        \centering
        \includegraphics[width=\textwidth]{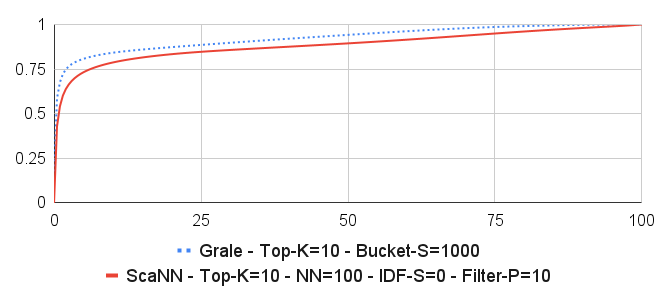}
        \caption{ogbn-products}
    \end{subfigure}
    \hfill
    \begin{subfigure}[b]{0.45\textwidth}
        \centering
        \includegraphics[width=\textwidth]{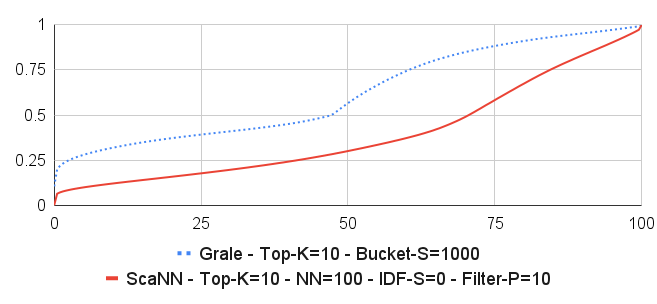}
        \caption{ogbn-arxiv}
    \end{subfigure}
\caption{\label{fig:with-pruning-nn=10} The edge-weight distribution when edges are retrieved using Grale with Top-K=10 and Bucket-S=1000, and GUS with ScaNN-NN=10  and the best-performing values for IDF-S and Filter-P. The total number of edges retrieved for ogbn-products by Grale -- Top-K=10 --  Bucket-S=1000 is $24,490,290$ and by ScaNN -- Top-K=10 -- NN=100 -- IDF-S=0 -- Filter-P=10 is $20,866,944$. For ogbn-arxiv the respective numbers are $1,693,430$ and $1,572,547$.}
\end{figure*}

In this section we compare the empirical quality of Grale and Dynamic GUS and also study the performance of the latter in a dynamic environment. More precisely:

\begin{enumerate}
    \item We compare the edges of the graph produced by Grale with the edges produced by an offline implementation of Dynamic Grale Using ScaNN (GUS) for a range of configurations. Note that the offline GUS and dynamic GUS provide identical results.%
    \item We provide latencies and resource usage of Dynamic GUS for different configurations in a dynamic setting.
\end{enumerate}

In this work, we focus on two datasets, both available at the Open Graph Benchmark.\footnote{https://ogb.stanford.edu/docs/nodeprop/}

\begin{itemize}
    \item ogbn-arxiv contains $169,343$ arXiv papers in Computer Science (CS) indexed by MAG \cite{DBLP:journals/qss/WangSHWDK20}. This dataset has two features for each paper: i) the publication year; and ii) a 128-dimensional feature vector obtained by averaging the embeddings of words present in its title and abstract. 
    \item ogbn-products contains $2,449,029$ Amazon products with two features for each product: i) a list of products that have been purchased together with it (based on the Amazon product co-purchasing network \cite{Bhatia16}); and ii) an embedding generated by extracting bag-of-words features from the product descriptions followed by a Principal Component Analysis to reduce the dimension to 100 \cite{DBLP:conf/kdd/ChiangLSLBH19}. 
\end{itemize}

\paragraph{Model training:} For each dataset, we train a model with the same architecture, which is a two-layer neural network with 10 hidden units per layer. The features explained above are used to train these models.

\paragraph{Bucket size for Grale:} Grale \cite{grale_paper} suggests a mechanism to bound the number of points in each bucket. As explained in that paper, Grale computes a set of buckets for each point and then, for each bucket, computes the set of points that belong to it. This might result in the creation of buckets with a large number of points, which has a negative impact on the running time. To mitigate that, the paper suggests defining a maximum bucket size, and randomly subdividing any buckets that are larger than that size. In this section, we run experiments on Grale with and without that bucket subdivision step. 

\subsection{Offline Experiments}
\label{sec:offline-experiments}

\paragraph{Plots.} For all plots related to this section (\cref{fig:all-edges,fig:grale-edges,fig:gus-edges-nn=10,fig:grale-vs-gus-nn=10,fig:with-pruning-nn=10,fig:with-pruning-nn=100}), we plot the edge-weight at different percentiles of the edges ordered by edge weight. To properly compare two algorithms using this plot, one should take into account the total number of edges produced by each algorithm. If two algorithms produce the same total number of edges, the one that appears above on the plot for a fixed percentile performs better at the specific percentile.

\paragraph{First experiment.} In the first experiment we compare the edges without setting a limit on bucket size in Grale, and retrieving all the edges with negative distance in ScaNN. The goal of this experiment is to experimentally validate \cref{lemma:same-output}. We present in \cref{fig:all-edges} the edge-weight for the datasets ogbn-arxiv and ogbn-products. We observe that Dynamic GUS achieves identical result as Grale in this setting as claimed by the lemma.

\paragraph{Second experiment} In the second experiment we study the effects of the number of nearest neighbors retrieved from ScaNN (denoted by ScaNN-NN), the percentage of overly popular buckets filtered (denoted by Filter-P), and the size of the inverse document frequency table (IDF-S). Filter-P equal to $x$ means that $x\%$ of the buckets that have the highest cardinalities are ignored. Moreover, IDF-S equal to $x$ means that we compute $x$ buckets with highest inverse document frequency and the weight of the rest is equal to the $x$-th highest weight. The reason behind computing only the top $x$ weights is to ensure that the size of the table is bounded and can easily be stored in memory. Furthermore, for Grale we run experiments for different bucket split sizes (Bucket-S). More precisely, if Bucket-S is set to $m$, we randomly divide each bucket that has more than $m$ points to ensure that the maximum size of each bucket is at most $m$. Note that since Grale does not maintain a spatial representation of the points apart from the sketch buckets produced by LSH, the number of edges returned for a point $p$ is always the number of scoring pairs that contain $p$. Therefore, the computational cost does not decrease if the user needs much fewer neighbors than output by Grale. The results are presented in \cref{fig:gus-edges-nn=10} and \cref{fig:grale-edges} (in the appendix). We observe that:
\begin{itemize}
    \item The edges produced by Dynamic GUS have higher quality (since their weights are higher when scored by the model) compared to Grale. This results from the more effective way in which Dynamic GUS limits the number of comparisons: while Grale does that by randomly splitting buckets, Dynamic GUS retrieves the set of the closest points to the query point in the embedding space, which are likely to have many bucket IDs (or some bucket IDs with high IDF weights) in common with the query point. For example, in \cref{fig:grale-edges-products} we observe that less than $30\%$ of the edges retrieved by Grale have a weight larger than $0.25$ (for all bucket sizes) for the ogbn-products dataset. In comparison, more than $97\%$ of the edges retrieved by Dynamic GUS with Filter-P$=10\%$ have weight above $0.25$, as shown in \cref{fig:gus-edges-nn=10}. For ease of comparison between the edge weight distribution of Grale and Dynamic GUS, we present the same results in a different format in \cref{fig:grale-vs-gus-nn=10} (in the appendix).
    \item From the same plots, we observe that filtering overly popular bucket IDs and using inverse document frequency improve the quality of the edges of the graph. For instance, in \cref{sfig:gus-sketch-nn10-prod} the edge weight at $20$ percentile is almost $0.3$ when filtering and IDF are not used, $0.45$ when IDF is used, and $0.74$ when filtering is used.
\end{itemize}

\paragraph{Third Experiment} Pruning the edges based on the weight assigned to them by the model is an important ingredient in most of the applications of Grale and Dynamic GUS at Google. Therefore, we also compare the quality of the graphs output by Dynamic GUS with the graphs output by Grale with such post-processing. For each point in the graph output by Grale, we keep the Top-$K$ neighbors. In particular, we filter the top 100 and 1000 neighbors, where we set ScaNN-NN equal to $K$ for Dynamic GUS. The results are shown in \cref{fig:with-pruning-nn=10} and \cref{fig:with-pruning-nn=100} (in the appendix). We observe that Grale and Dynamic GUS have high and comparable edge weights for ogbn-products and Top-K$=10$. For Top-K$=100$, the quality of Dynamic GUS is significantly higher, despite the fact that the Grale algorithm scored a factor $2-5$ more edges than Dynamic GUS. For ogbn-arXiv, we observe that the scores of the edges found by Dynamic GUS are slightly lower. Recall that the computational cost of Grale is not decreased by the Top-K post-processing, which is one of the advantages of Dynamic GUS. In the next section, we show how choosing ScaNN-NN can have a significant improvement on the performance with respect to latency.

\subsection{Dynamic Experiments}

In this section we focus on the performance of Dynamic GUS in a dynamic setting. The experiments were run on a single Google Cloud machine (machine type n2d-standard-48 with CPU platform AMD EPYC~7B12). All computation tasks of each experiment were run on a single core (vCPU), one experiment at a time, and queries were sent sequentially, one by one, to Dynamic GUS (running on the same core). For each dataset, we sampled a random subset of 10,000 points. For each combination of ScaNN-NN, IDF-S and Filter-P, we queried the neighbors of each of those points. We analyzed the query latency (wall clock time from sending the request to receiving the response) in \cref{fig:latency} (in the appendix), and the average CPU time per request (as the total CPU time of the respective experiment divided by the number of queries) and the maximum memory usage of the respective experiment in \cref{tbl:resources} (in the appendix). We do not replicate the quality analysis here, as that has already been done in \cref{sec:offline-experiments} using an offline implementation of Dynamic GUS that produces identical results. We also note that all operations are run sequentially and centralized only for the interpretability and stability of the experiments, and the algorithm can be run in a parallel and distributed setting for larger datasets. We observe that for the ogbn-arxiv dataset, the median latency is around 10 to 20 milliseconds. Similarly, the median latency for ogbn-products is in the range between 5 and 25 milliseconds. This experiment shows that the latency is indeed in order of tens of milliseconds, which is consistent with the latencies observed across use cases at Google. We also observe that choosing smaller values of ScaNN-NN has significant impact on the median latencies and latencies of outliers for both datasets. The same holds for larger values of Filter-P, which suggests that reducing the dimension of the  embedding space has positive impact on the performance as well.

Furthermore, we also measured the median wall clock time for insertions of points into the datasets, which is 0.29ms for ogbn-arxiv (95\%ile = 0.54ms) and 0.42ms for ogbn-products (95\%ile = 0.78ms).

\bibliography{references}
\clearpage

\appendix
\section{Additional Experimental Evaluations}
\label{sec:app_add_exp}
\begin{figure*}[ht]
    \centering
    \begin{subfigure}[b]{0.45\textwidth}
        \centering
        \includegraphics[width=\textwidth]{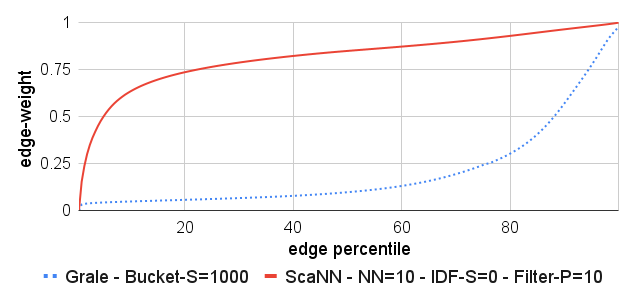}
        \caption{ogbn-products}
        \label{sfig:grale-gus-nn10-prod}
    \end{subfigure}
    \hfill
    \begin{subfigure}[b]{0.45\textwidth}
        \centering
        \includegraphics[width=\textwidth]{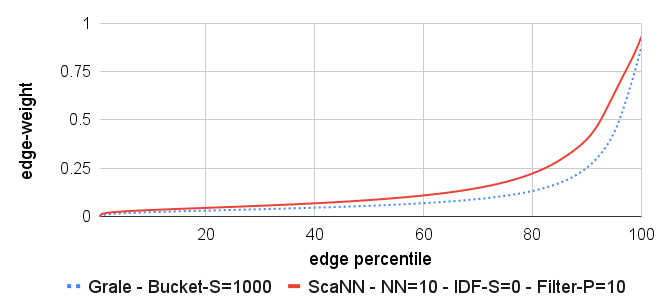}
        \caption{ogbn-arxiv}
        \label{sfig:grale-gus-nn10-arxiv}
    \end{subfigure}
    \begin{subfigure}[b]{0.45\textwidth}
        \centering
        \includegraphics[width=\textwidth]{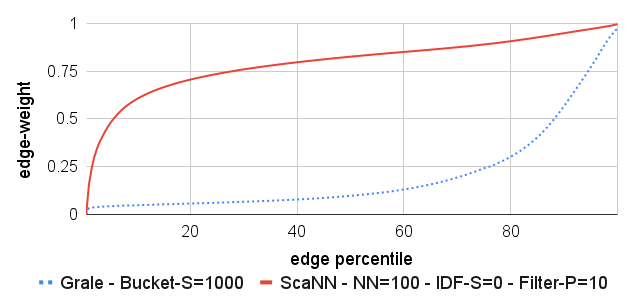}
        \caption{ogbn-products}
        \label{sfig:grale-gus-nn100-prod}
    \end{subfigure}
    \hfill
    \begin{subfigure}[b]{0.45\textwidth}
        \centering
        \includegraphics[width=\textwidth]{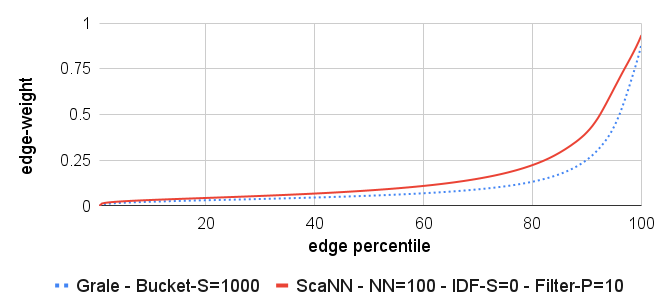}
        \caption{ogbn-arxiv}
        \label{sfig:grale-gus-nn100-arxiv}
    \end{subfigure}
    \begin{subfigure}[b]{0.45\textwidth}
        \centering
        \includegraphics[width=\textwidth]{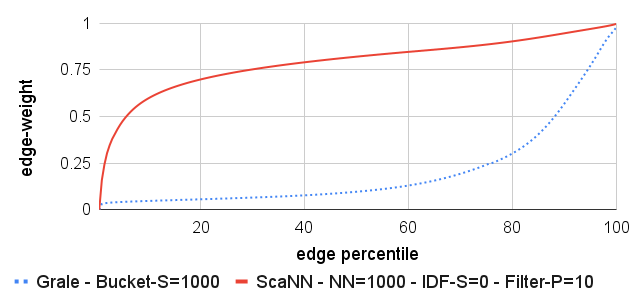}
        \caption{ogbn-products}
        \label{sfig:grale-gus-nn1000-prod}
    \end{subfigure}
    \hfill
    \begin{subfigure}[b]{0.45\textwidth}
        \centering
        \includegraphics[width=\textwidth]{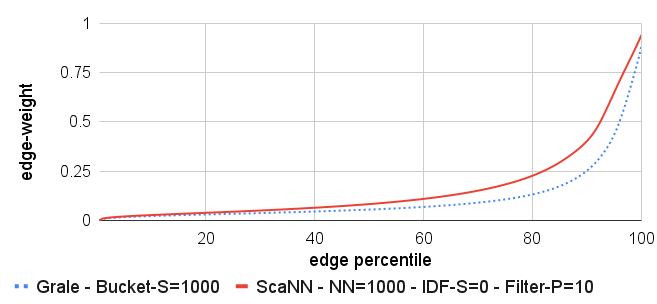}
        \caption{ogbn-arxiv}
        \label{sfig:grale-gus-nn1000-arxiv}
    \end{subfigure}
\caption{\label{fig:grale-vs-gus-nn=10} The edge-weight distribution when edges are retrieved using Grale with Bucket-S=1000, and GUS with (\subref{sfig:grale-gus-nn10-prod}-\subref{sfig:grale-gus-nn10-arxiv}) ScaNN-NN=10, (\subref{sfig:grale-gus-nn100-prod}-\subref{sfig:grale-gus-nn100-arxiv}) ScaNN-NN=100, (\subref{sfig:grale-gus-nn1000-prod}-\subref{sfig:grale-gus-nn1000-arxiv}) ScaNN-NN=1000 and the best-performing values for IDF-S=0 and Filter-P=10. The total numbers of edges retrieved are 
(\subref{sfig:grale-gus-nn10-prod})  $140,242,620,100$ by Grale and $19,020,281$ by ScaNN;
(\subref{sfig:grale-gus-nn10-arxiv}) $11,028,453,890$ by Grale and $1,416,463$ by ScaNN; 
(\subref{sfig:grale-gus-nn100-prod}) $140,242,620,100$ by Grale and $103,396,128$ by ScaNN; 
(\subref{sfig:grale-gus-nn100-arxiv}) $11,028,453,890$ by Grale and $15,442,048$ by ScaNN;
(\subref{sfig:grale-gus-nn1000-prod}) $140,242,620,100$ by Grale $135,807,863$ by ScaNN; 
(\subref{sfig:grale-gus-nn1000-arxiv}) $11,028,453,890$ by Grale and $119,986,924$ by ScaNN.}
\end{figure*}

\begin{figure*}
    \centering
    \begin{subfigure}[b]{0.48\textwidth}
        \centering
        \includegraphics[width=\textwidth]{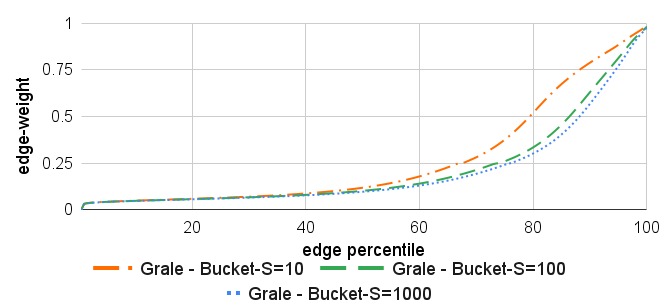}
        \caption{ogbn-products}
        \label{fig:grale-edges-products}
    \end{subfigure}
    \hfill
    \begin{subfigure}[b]{0.48\textwidth}
        \centering
        \includegraphics[width=\textwidth]{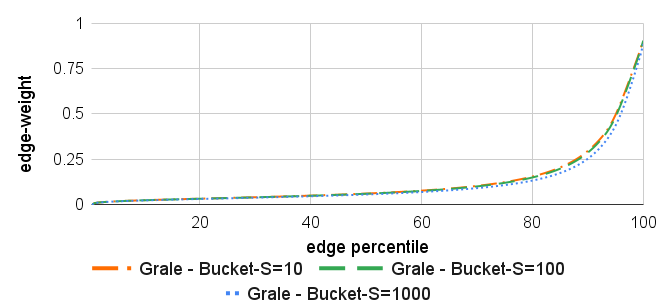}
        \caption{ogbn-arxiv}
    \end{subfigure}
\caption{\label{fig:grale-edges} The edge-weight distribution of the edges retrieved using Grale with varying sizes of buckets for ogbn-arxiv and ogbn-product. The number of edges retrieved for ogbn-products are $2,418,660,128$ for Grale -- Bucket-S=10, $22,806,807,984$ for Grale -- Bucket-S=100, and $140,242,620,100$ for Grale -- Bucket-S=1000. For ogbn-arxiv the respective numbers are $152,316,856$, $1,577,041,422$, and $11,028,453,890$.}
\end{figure*}

\begin{figure*}
    \centering
    \begin{subfigure}[b]{0.45\textwidth}
        \centering
        \includegraphics[width=\textwidth]{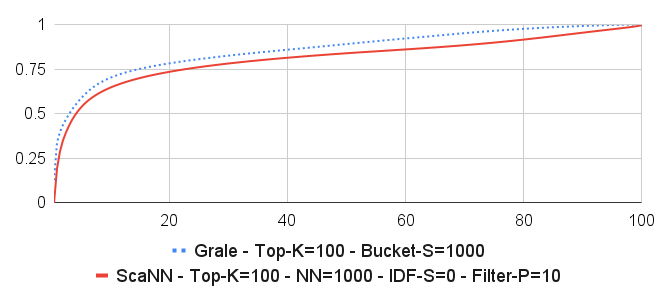}
        \caption{ogbn-products}
    \end{subfigure}
    \hfill
    \begin{subfigure}[b]{0.45\textwidth}
        \centering
        \includegraphics[width=\textwidth]{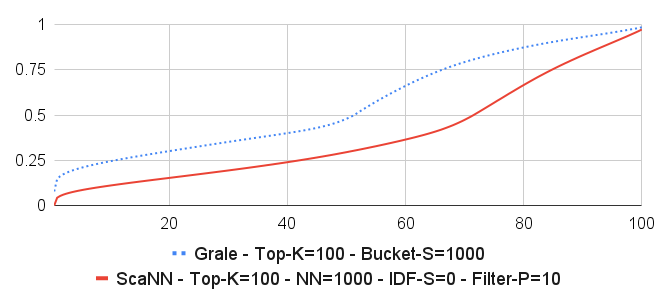}
        \caption{ogbn-arxiv}
    \end{subfigure}
\caption{\label{fig:with-pruning-nn=100} The edge-weight distribution when edges are retrieved using Grale with Top=K=100 and Bucket-S=1000, and GUS with ScaNN-NN=100 and the best-performing values for IDF-S and Filter-P. The total number of edges retrieved for ogbn-products by Grale -- Top-K=100 --  Bucket-S=1000 is $244,902,900$ and by ScaNN -- Top-K=100 -- NN=1000 -- IDF-S=0 -- Filter-P=10 is $103,768,901$. For ogbn-arxiv the respective numbers are $16,934,300$ and $15,596,017$.}
\end{figure*}

\begin{figure*}
    \centering
    \begin{subfigure}[b]{0.49\textwidth}
        \centering
        \includegraphics[width=\textwidth]{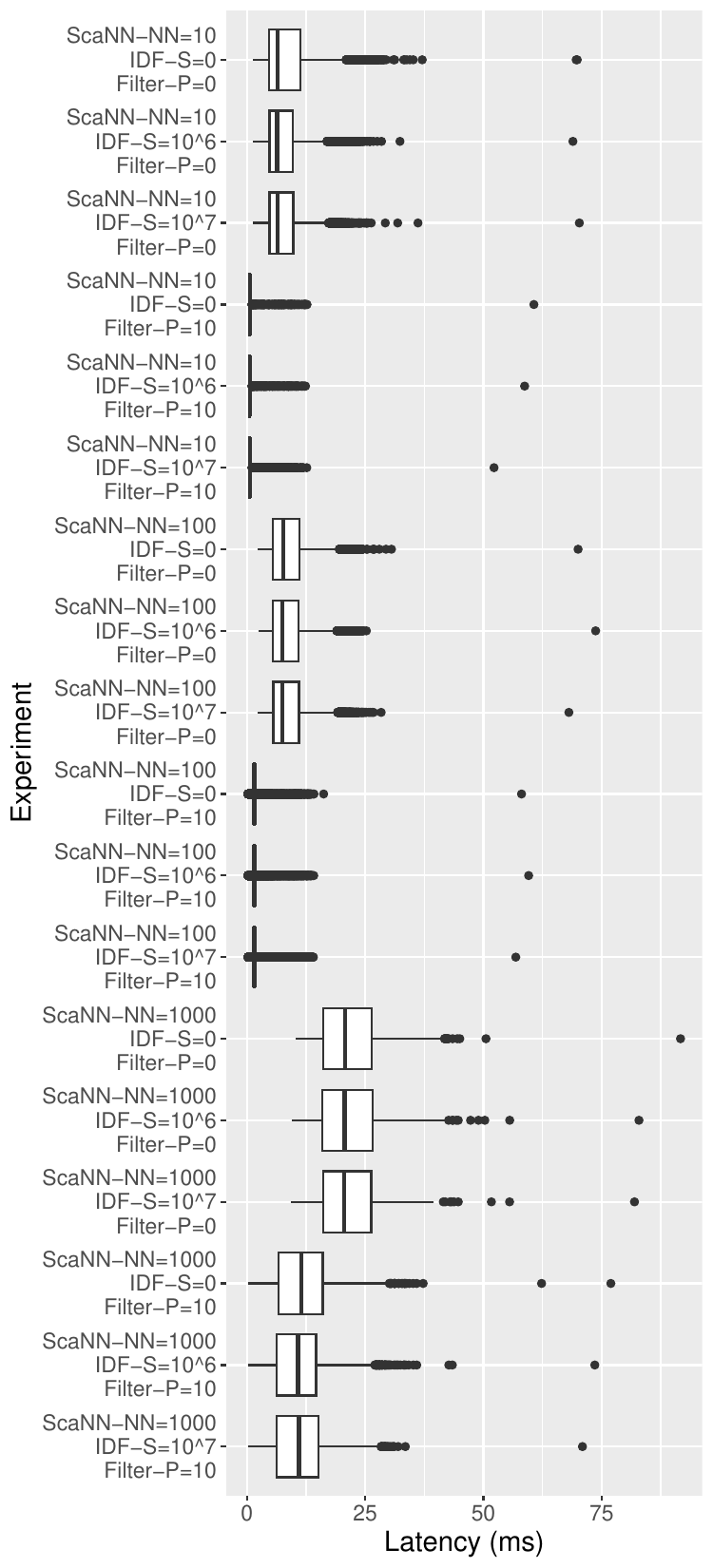}
        \caption{ogbn-arXiv}
    \end{subfigure}
    \hfill
    \begin{subfigure}[b]{0.49\textwidth}
        \centering
        \includegraphics[width=\textwidth]{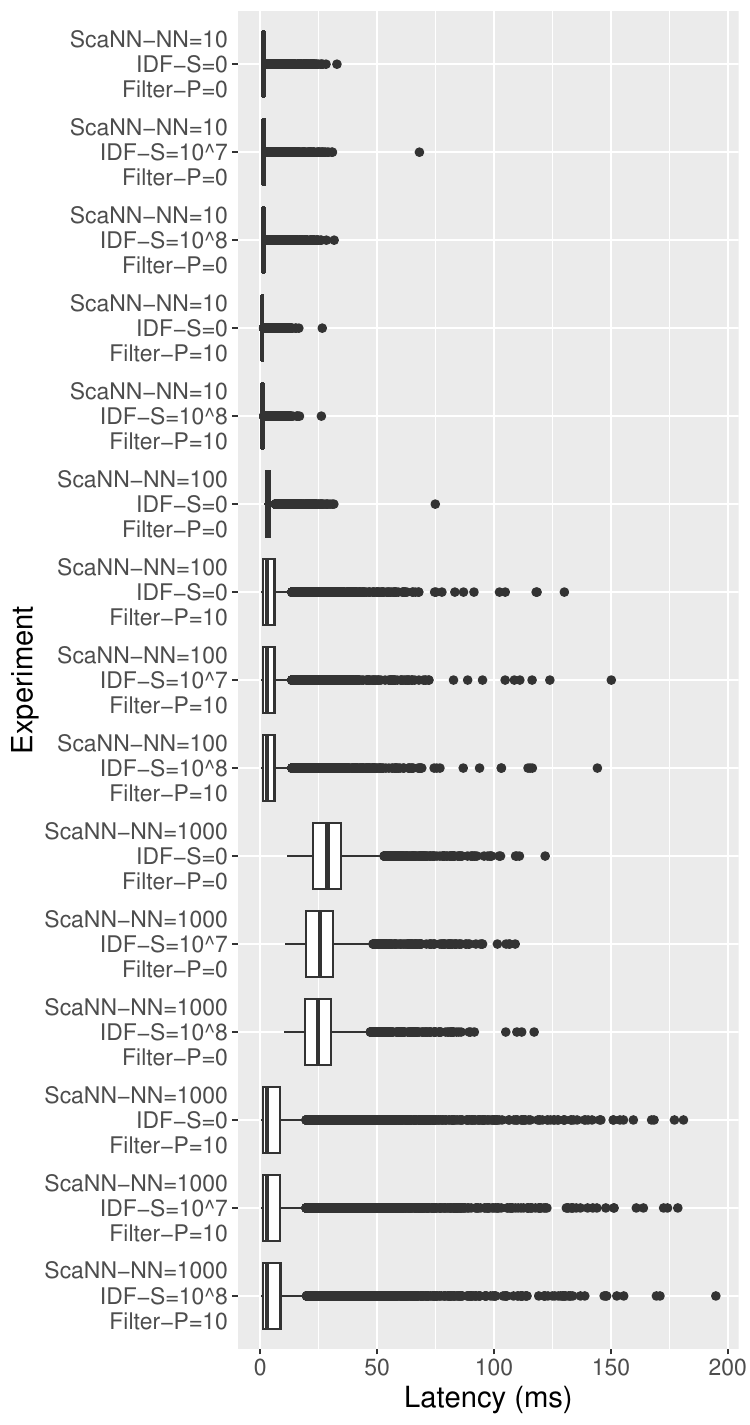}
        \caption{ogbn-products}
    \end{subfigure}
    \caption{\label{fig:latency} Latency distribution for the datasets ogbn-arxiv and ogbn-products in the online experiment. ScaNN-NN denotes the number of neighbors retrieved from ScaNN; IDF-S denotes the number of inverse document frequencies computed; and Filter-P denotes the percentage of overly popular bucket IDs filtered.}
\end{figure*}

\begin{figure*}
    \centering
    \subcaptionbox{ogbn-arXiv}{
\begin{tabular}{lllll}
  \toprule
ScaNN-NN & IDF-S & Filter-P & Avg. time & Max. mem. \\ 
  \midrule
10 & 0 & 0 & 14 ms & 675 MiB \\ 
  10 & 10\verb|^|6 & 0 & 14 ms & 688 MiB \\ 
  10 & 10\verb|^|7 & 0 & 14 ms & 677 MiB \\ 
  10 & 0 & 10 & 14 ms & 501 MiB \\ 
  10 & 10\verb|^|6 & 10 & 14 ms & 502 MiB \\ 
  10 & 10\verb|^|7 & 10 & 14 ms & 503 MiB \\ 
  100 & 0 & 0 & 14 ms & 675 MiB \\ 
  100 & 10\verb|^|6 & 0 & 14 ms & 679 MiB \\ 
  100 & 10\verb|^|7 & 0 & 14 ms & 686 MiB \\ 
  100 & 0 & 10 & 14 ms & 504 MiB \\ 
  100 & 10\verb|^|6 & 10 & 14 ms & 504 MiB \\ 
  100 & 10\verb|^|7 & 10 & 14 ms & 516 MiB \\ 
  1000 & 0 & 0 & 14 ms & 681 MiB \\ 
  1000 & 10\verb|^|6 & 0 & 14 ms & 682 MiB \\ 
  1000 & 10\verb|^|7 & 0 & 14 ms & 678 MiB \\ 
  1000 & 0 & 10 & 14 ms & 504 MiB \\ 
  1000 & 10\verb|^|6 & 10 & 14 ms & 502 MiB \\ 
  1000 & 10\verb|^|7 & 10 & 14 ms & 503 MiB \\ 
   \bottomrule
\end{tabular}
    }
    \hfill
    \subcaptionbox{ogbn-products}{
        \begin{tabular}{lllll}
  \toprule
ScaNN-NN & IDF-S & Filter-P & Avg. time & Max. mem. \\ 
  \midrule
10 & 0 & 0 & 33 ms & 16684 MiB \\ 
  10 & 10\verb|^|7 & 0 & 33 ms & 16953 MiB \\ 
  10 & 10\verb|^|8 & 0 & 33 ms & 18868 MiB \\ 
  10 & 0 & 10 & 33 ms & 13086 MiB \\ 
  10 & 10\verb|^|7 & 10 & 33 ms & 26645 MiB \\ 
  10 & 10\verb|^|8 & 10 & 33 ms & 14973 MiB \\ 
  100 & 0 & 0 & 33 ms & 16678 MiB \\ 
  100 & 10\verb|^|7 & 0 & 33 ms & 33871 MiB \\ 
  100 & 10\verb|^|8 & 0 & 33 ms & 18854 MiB \\ 
  100 & 0 & 10 & 33 ms & 13133 MiB \\ 
  100 & 10\verb|^|7 & 10 & 33 ms & 13365 MiB \\ 
  100 & 10\verb|^|8 & 10 & 33 ms & 14981 MiB \\ 
  1000 & 0 & 0 & 33 ms & 16685 MiB \\ 
  1000 & 10\verb|^|7 & 0 & 33 ms & 16959 MiB \\ 
  1000 & 10\verb|^|8 & 0 & 33 ms & 18867 MiB \\ 
  1000 & 0 & 10 & 33 ms & 13141 MiB \\ 
  1000 & 10\verb|^|7 & 10 & 33 ms & 13355 MiB \\ 
  1000 & 10\verb|^|8 & 10 & 33 ms & 14982 MiB \\ 
   \bottomrule
\end{tabular}

    }
    \caption{\label{tbl:resources} The average (end-to-end) CPU time per query and maximum memory (in MiB) usage of each online experiment. ScaNN-NN denotes the number of neighbors retrieved from ScaNN; IDF-S denotes the number of inverse document frequencies computed; and Filter-P denotes the percentage of overly popular bucket IDs filtered.}
\end{figure*}

\end{document}